\newtheorem{definition}{Definition}
\newtheorem{theorem}{Theorem}
\newtheorem{lemma}{Lemma}
\newtheorem{corollary}{Corollary}
\let\NAT@parse\undefined
\newtheorem{innerexample}{Example}
\newcommand{\abs}[1]{\left\lvert#1\right\rvert}
\newcommand{\R}{\mathbb{R}}
\popQED\end{innerexample}}
\title{\LARGE \bf
Integrator Forwading Design for Unicycles with Constant and Actuated Velocity in Polar Coordinates
}
\author{Miroslav Krstić$^{1}$, Velimir Todorovski$^{1}$, Kwang Hak Kim$^{1}$, and Alessandro Astolfi$^{2}$ 
\thanks{This work was supported in part by the Office of Naval Research under Grant No. N00014-23-1-2376, in part by the Air Force Office of Scientific Research under Grant No. FA9550-23-1-0535, and in
part by the National Science Foundation under Grant No. ECCS-2151525. The results and opinions in this paper are solely of the authors and do not reflect the position or the policy of the U.S. Government or the National Science
Foundation.}
\thanks{$^{1}$ M. Krstić, K. Kim, and V. Todorovski are with the Department of Mechanical and Aerospace Engineering, UC San Diego, 9500 Gilman Drive, La Jolla, CA, 92093-0411, {\tt\small \{krstic,kwk001,vtodorovski\}@ucsd.edu}}%
\thanks{$^{2}$ A. Astolfi is with  Imperial College London, London,
United Kingdom and the University of Rome Tor Vergata, Rome, Italy,
{\tt\small a.astolfi@imperial.ac.uk}}%
}
\begin{document}

\maketitle
\thispagestyle{empty}
\pagestyle{empty}


\begin{abstract}
In a companion paper, we present a modular framework for unicycle stabilization in polar coordinates that provides smooth steering laws through backstepping. Surprisingly, the same problem also allows application of integrator forwarding. In this work, we leverage this feature and construct new smooth steering laws together with control Lyapunov functions (CLFs), expanding the set of CLFs available for inverse optimal control design. In the case of constant forward velocity (Dubins car), backstepping produces finite-time (deadbeat) parking, and we show that integrator forwarding yields the very same class of solutions. This reveals a fundamental connection between backstepping and forwarding in addressing both the unicycle and, the Dubins car parking problems.
\end{abstract}



\section{Introduction}
Transforming the unicycle model from Cartesian to polar coordinates has proven highly effective for control design. In polar coordinates, the state naturally encodes both distance and relative heading to the target, and the singularity at the origin allows one to bypass the Brockett–Ryan-Coron–Rosier conditions \cite{brockett1983asymptotic,coron1994relation, ryan1994brockett}, which prohibit the design of continuous, as well as discontinuous time-invariant stabilizers for the unicycle kinematics.

This insight was first exploited by Badreddin and Mansour~\cite{badreddin1993fuzzy}, who designed a linear state-feedback controller in polar coordinates. Astolfi~\cite{astolfi1999exponential} later characterized its region of attraction, showing it covers an entire half-plane ($x>0$ or $x<0$ for all headings, except a measure-zero set).
Building on this, Aicardi et al.~\cite{aicardi1995} have developed a passivity-based feedback law with bidirectional forward velocity, achieving global asymptotic stabilization. Their analysis relies on a non-strict Lyapunov function and Barbalat’s lemma, which excludes constructive $\mathcal{KL}$ estimates. Han and Wang~\cite{wang24_force_controlled_safestable} have refined this approach with a Lyapunov function defined only on an arbitrarily large, but compact set containing the origin. Restrepo et al.~\cite{restrepo2020leader} have gone further in exploiting the polar coordinates, attaining global exponential stability via a backstepping design with a strict CLF; however, this comes at a cost: it sacrifices modularity, complicates extensions to barrier CLFs, and restricts the unicycle to unidirectional motion, which can produce less efficient parking trajectories.
Other examples of singular transformations are presented in Astolfi~\cite{astolfi1995exponential,astolfi1996discontinuous}. These allow continuous exponential stabilization in the transformed coordinates, though the transformation excludes certain initial conditions—such as states on the $x$-axis or aligned with the target.
Note that, these transformations ensure stability in the transformed coordinates, while in the Cartesian representation only attractivity is obtained

With constant forward velocity, the unicycle reduces to the Dubins vehicle~\cite{dubins1957curves}, a standard model in guidance and pursuit problems.  
Missile guidance is largely dominated by proportional navigation (PN), which ensures zero line-of-sight error but ignores terminal orientation~\cite{zarchan2012tactical,siouris2004missile}. Linear quadratic (LQ) controllers can enforce orientation and provide optimality guarantees~\cite{palumbo2010modern,ryoo2005optimal,ryoo2006time,shaferman2008linear}, but their reliance on linear or linearized dynamics is often very restrictive.

In \cite{todorovski2025_CLF} and \cite{Krstic2025_Dubins} we apply backstepping in polar coordinates to design steering laws for parking, achieving global asymptotic and half-global finite-time stabilization, respectively. In the present work, we develop control laws for both the unicycle and the Dubins vehicle model via integrator forwarding~\cite{mazenc1996adding, sepulchre1997forwarding, krstic2004feedback}, highlighting their connections with our previous backstepping designs. This approach is enabled either by the modular framework introduced in \cite{todorovski2025_CLF} or by fixing the forward velocity, which exposes a system structure naturally suited to integrator forwarding.  

\section{Modular Feedback Design for the Unicycle}
\label{sec:modular_section}
We consider the unicycle model 
\begin{equation}
\dot{x} =  v \cos\theta, \quad 
\dot{y} = v \sin\theta,
\quad \dot{\theta} = \omega \,, \label{eq:unicycle_cartesian}
\end{equation}
where $(x(t),y(t)) \in \R^2$ is the position of the unicycle in Cartesian coordinates, $\theta(t) \in \mathbb{R}$ is the heading angle, $v$ is the forward velocity, and $\omega$ is the angular velocity input. The unicycle can be represented in polar coordinates with the transformation given in Fig.~\ref{fig:unicycle_cord} as
\begin{subequations}\label{eq:unicycle_polar_closed_loop-Gv-1}
    \begin{align}
    \dot{\rho} &= 
    -v \cos\gamma \,, \label{eq:unicycle_polar_rhodot}\\
    \dot{\delta} &= v  \frac{\sin\gamma}{\rho}\,, \label{eq:unicycle_polar_deltadot}\\
    \dot{\gamma} &= v \frac{\sin\gamma}{\rho} -\omega  \label{eq:unicycle_polar_gammadot} \,.
    \end{align} 
\end{subequations}
The polar coordinates are defined as the distance to the origin \(\rho=\sqrt{x^2+y^2}\), the polar angle \(\delta=\text{{\rm atan2}}(y,x)+\pi\), and the line-of-sight (LoS) angle \(\gamma=\delta-\theta\) (see Fig.~\ref{fig:unicycle_cord}). 
The inverse mapping  \((\rho,\delta,\gamma)\mapsto(x,y,\theta)\) is given by \(x=-\rho\cos\delta\), \(y=-\rho\sin\delta\), and \(\theta=\delta-\gamma\).
The unicycle model \eqref{eq:unicycle_cartesian} in Cartesian coordinates does not admit a continuous/discontinuous stabilizers because of the Brockett–Ryan-Coron–Rosier conditions. In polar coordinates, however, the representation \eqref{eq:unicycle_polar_closed_loop-Gv-1} circumvents this obstruction—at the cost of introducing the singularity at $\rho=0$—and thereby allows smooth static feedback laws for $\rho>0$. Yet, when translated back into Cartesian coordinates, discontinuities reappear, through $\text{atan2}(y,x)$ along the line ${x<0,y=0}$.

Next, we recap the modular framework from our paper \cite{todorovski2025_CLF} for designing feedback laws, that is based on decoupling the state $\rho$ from the angular dynamics $(\delta,\gamma)$.

\label{sec-preliminaries}
\subsection{Forward velocity feedback}
\label{sec:forward_velocity_feedback}
For stabilizing the origin of system \eqref{eq:unicycle_polar_closed_loop-Gv-1}, it is not obvious that any feedback law can significantly outperform or simplify the choice
\begin{equation}
\label{eq-basic-v-control}
\fbox{$v = k_1 \rho \cos\gamma$} \,.
\end{equation}
 Under this velocity feedback, the closed-loop subsystem
\begin{align}
\label{eq:unicycle_polar_closed_loop-Gv-2}
\dot{\rho} = - k_1 \rho \cos^2(\gamma)
\end{align}
is such that $\rho$ converges exponentially to zero whenever $\gamma \neq 0$. On the set $\rho>0$, where the cancellation $\rho/\rho = 1$ holds, the control system \eqref{eq:unicycle_polar_closed_loop-Gv-1} reduce to
\begin{subequations}
\label{eq:unicycle_polar_closed_loop-Gv-3}
\begin{align}
\dot{\delta} &= \frac{k_1}{2} \sin(2\gamma)\,, \label{eq:unicycle_polar_closed_loop-Gv-3n} \\
\dot{\gamma} &= \frac{k_1}{2} \sin(2\gamma) - \omega \label{eq:unicycle_polar_closed_loop-Gv-3a}\,,
\end{align}
\end{subequations}
where the steering input $\omega$ must be designed based only on $(\delta,\gamma)$.
A key observation is that the term $\frac{k_1}{2} \sin(2\gamma)$ in \eqref{eq:unicycle_polar_closed_loop-Gv-3a}, which is destabilizing, has to be canceled. To this end, let
\begin{equation}
\label{eq-omega-general}
\fbox{$\omega = \dfrac{k_1}{2} \sin(2\gamma) + \tilde\omega$} \,,
\end{equation}
where $\tilde\omega$ is then designed for the reduced system
\begin{subequations}
\label{eq:unicycle_polar_closed_loop-Gv-3-pass}
\begin{align}
\dot{\delta} &= \frac{k_1}{2} \sin(2\gamma)\,, \label{eq:unicycle_polar_closed_loop-Gv-3n-pass} \\
\dot{\gamma} &= -\tilde\omega \label{eq:unicycle_polar_closed_loop-Gv-3a-pass}\,.
\end{align}
\end{subequations}

\begin{figure}[t]
\centering
\includegraphics[width=0.6\linewidth]{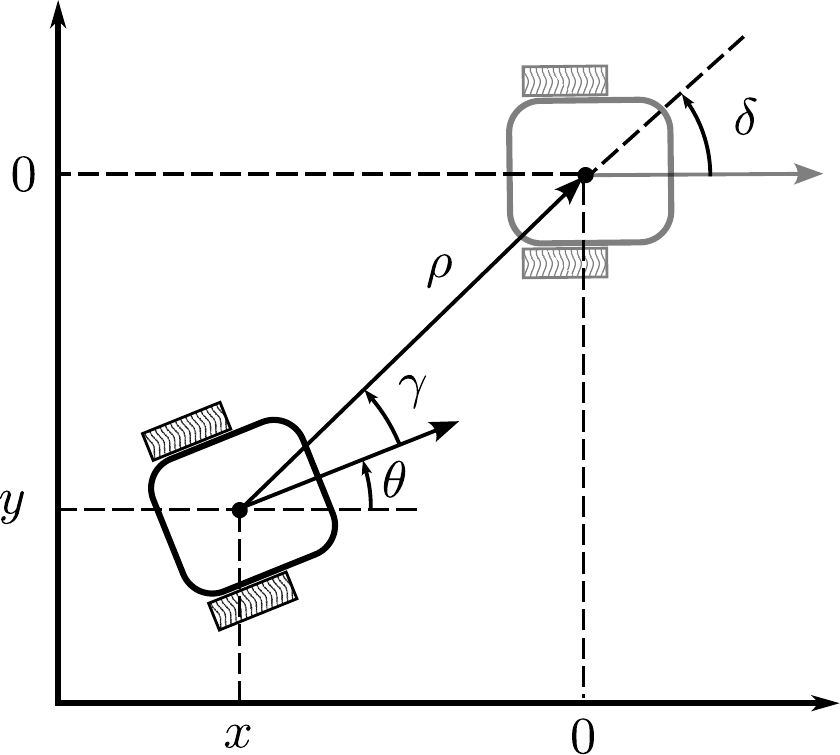}
\caption{Unicycle orientation \((x,y,\theta)\) with respect to the goal state \((0,0,0)\), along with the polar coordinate transformation \((x,y,\theta)\mapsto (\rho,\delta,\gamma)\). 
}
\label{fig:unicycle_cord}
\end{figure}
\subsection{State spaces, stability and CLF definitions}
Before presenting the steering laws $\tilde{\omega}$, we introduce the state spaces 
\begin{equation}
\mathcal{S} := 
\left\{ \rho> 0\right\}\times\mathcal{T}\,,\quad \mathcal{T}:= \left\{ \delta \in\mathbb{R}, \gamma \in\mathbb{R}  \right\} 
\label{eq:ss_S}
\end{equation}
\begin{equation}
    \mathcal{S}_1 := \left\{ \rho> 0\right\}\times\mathcal{T}_1\,,\quad \mathcal{T}_1:= \left\{  \delta \in  \mathbb{R}, \abs{\gamma} <  \pi \right\} \  \label{eq:ss_S1}
\end{equation}
 with the associated metrics
\begin{align}
|(\rho,\delta,\gamma)|_{\mathcal{S}}   &\coloneqq \rho+|(\delta,\gamma)|_{\mathcal{T}} 
      :=  \rho + |\delta| + |\gamma|  
      \\
|(\rho,\delta,\gamma)|_{\mathcal{S}_1} &\coloneqq \rho+|(\delta,\gamma)|_{\mathcal{T}_1} 
      := \rho+ |\delta| + 2\tan\!\frac{|\gamma|}{2}   
\end{align}


\begin{definition}
\label{def-our-GAS}
Consider the system \eqref{eq:unicycle_polar_closed_loop-Gv-2}, \eqref{eq:unicycle_polar_closed_loop-Gv-3-pass} with a feedback law $\tilde\omega(\gamma,\delta)$ that is continuous on a state space $\mathcal{Q}$ with respect to its metric. 
If there exists a class $\mathcal{KL}$ function $\beta$ such that, for all $t\geq 0$, it holds that $|(\rho(t),\delta(t), \gamma(t))|_{\mathcal{Q}}\leq \beta\left(|(\rho_0,\delta_0, \gamma_0)|_{\mathcal{Q}},t\right)$, we say that the 
point $\rho=\delta = \gamma=0$ is {\em globally asymptotically stable on $\mathcal{Q}$} (GAS on $\mathcal{Q}$). 
\end{definition}

With this definition, expressed explicitly in terms of a class $\mathcal{KL}$ bound, we circumvent the problem that stability is defined for equilibria away from the boundary of the state space. 
Next, we define the notion of a CLF for \eqref{eq:unicycle_polar_closed_loop-Gv-1}.
\begin{definition}[CLF for the unicycle \eqref{eq:unicycle_polar_closed_loop-Gv-1}]
\label{def-CLF}
A continuously differentiable function $(\rho,\delta,\gamma)\mapsto V$ 
is a \textit{control Lyapunov function} (CLF) for \eqref{eq:unicycle_polar_closed_loop-Gv-1} if it has the following properties.
\begin{enumerate}
\item There exist class $\mathcal{K}$ functions  $(\bar\alpha_1,\bar\alpha_2)$ such that,  for all $(\rho,\delta,\gamma)$ in 
$\Sigma = \{\rho\geq 0\}\times \hat{\mathcal{T}}$, where $\hat{\mathcal{T}}\in\{\mathcal{T},\mathcal{T}_1\}$, 
$ \bar\alpha_1(|(\rho,\delta, \gamma)|_{\hat{\mathcal{S}}}) \le V(\rho,\delta,\gamma) \le \bar\alpha_2(|(\rho,\delta, \gamma)|_{\hat{\mathcal{S}}})$, where $\hat{\mathcal{S}} = \{\rho> 0\}\times \hat{\mathcal{T}}$. \label{it:clf_prop_1}
\item There exists $(v/\rho,\omega)\in \mathbb{R}^2$ such that
\begin{equation*}
\left[-\dfrac{\partial V}{\partial\rho}\rho\cos\gamma + \left(\dfrac{\partial V}{\partial\delta}+\dfrac{\partial V}{\partial\gamma}\right)\sin\gamma \right]\dfrac{v}{\rho} - \dfrac{\partial V}{\partial\gamma}\omega < 0 \,,
\end{equation*}
for all $(\rho,\delta,\gamma)\neq (0,0,0)$ in $\Sigma$. \label{it:clf_prop_2}
\end{enumerate}
\end{definition}
\raggedbottom
\section{Integrator Forwarding Controllers}

Consider the system \eqref{eq:unicycle_polar_closed_loop-Gv-3-pass}. Observe that \eqref{eq:unicycle_polar_closed_loop-Gv-3n-pass} depends solely on the LoS angle $\gamma$, and since \eqref{eq:unicycle_polar_closed_loop-Gv-3a-pass} involves only the single input $\tilde{\omega}$, the overall system \eqref{eq:unicycle_polar_closed_loop-Gv-3-pass} exhibits a \textit{strict feedforward} structure, enabling the use of the integrator forwarding method~\cite{sepulchre1997forwarding}. Based on this technique, we design two steering controllers accompanied by corresponding CLFs on the state-spaces \eqref{eq:ss_S} and \eqref{eq:ss_S1}, respectively. For this, we define the function $\mbox{sinc}(a)$ as
\begin{equation}
\mbox{sinc}(a) \coloneqq \frac{\sin a}{a} \quad \text{if} \quad a \ne 0, \quad \text{and} \quad \mbox{sinc}(0) = 1 \,,
\end{equation}
which is bounded and continuous, and the sine integral function defined as
\begin{equation}
\mbox{Si}(a) = \int_0^{a}\mbox{sinc}(\alpha){\rm d}\alpha
\,. \label{eq:Si_function}
\end{equation}

\subsection{Global Forwarding (GloFo) Controller:}

\begin{theorem}
\label{thm:CLF_GloFo}
Consider the system \eqref{eq:unicycle_polar_closed_loop-Gv-1} in closed-loop with the feedback laws \eqref{eq-basic-v-control}, \eqref{eq-omega-general}, and
\begin{equation}
\label{eq:GloFo}
\tilde\omega = k_2\gamma + k_3\mbox{sinc}(2\gamma)
\left(\delta + \dfrac{k_1}{2k_2}\mbox{Si}(2\gamma) \right) \,,
\end{equation}
 with arbitrary $k_1, k_2, k_3 > 0$.
The origin $\rho = \delta = \gamma = 0$ is GAS on $\mathcal{S}$ in accordance with Definition \ref{def-our-GAS}.
Furthermore, the
continuously differentiable function 
\begin{equation}
V(\rho,\delta, \gamma)  = 
\rho^2+\left(\delta + \frac{k_1}{2k_2} {\rm Si}(2\gamma) \right)^2 + q^2 \gamma^2
\,,  \label{eq:V_CLF_GloFo_gamma_delta}
\end{equation}
with $q = \sqrt{k_1/k_3}$ is a (globally) strict CLF for \eqref{eq:unicycle_polar_closed_loop-Gv-1} with respect to the input pair $(v/\rho,\omega)$ in the sense of Def.~\ref{def-CLF}. 
\end{theorem}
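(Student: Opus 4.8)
The plan is to make the feedback-induced cascade explicit, perform the integrator-forwarding construction by hand, and then certify the resulting $V$ as a strict Lyapunov function whose decay encodes both the CLF inequality and the $\mathcal{KL}$ estimate. First I would substitute \eqref{eq-basic-v-control} and \eqref{eq-omega-general} into \eqref{eq:unicycle_polar_closed_loop-Gv-1} to obtain $\dot\rho = -k_1\rho\cos^2\gamma$ together with the strict-feedforward pair \eqref{eq:unicycle_polar_closed_loop-Gv-3-pass}. The forwarding coordinate is $z := \delta + \tfrac{k_1}{2k_2}\mathrm{Si}(2\gamma)$, and I would motivate the cross-term as the integral $\int_0^\infty \tfrac{k_1}{2}\sin(2\gamma(s))\,\mathrm{d}s$ of $\dot\delta$ along the baseline-stabilized flow $\dot\gamma = -k_2\gamma$, i.e. $\gamma(s) = \gamma e^{-k_2 s}$; the substitution $u = \gamma e^{-k_2 s}$ turns this into $\tfrac{k_1}{2k_2}\int_0^{\gamma}\tfrac{\sin 2u}{u}\,\mathrm{d}u = \tfrac{k_1}{2k_2}\mathrm{Si}(2\gamma)$. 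Differentiating $z$ with $\tfrac{\mathrm d}{\mathrm dt}\mathrm{Si}(2\gamma) = 2\,\mathrm{sinc}(2\gamma)\dot\gamma$ and using $\gamma\,\mathrm{sinc}(2\gamma) = \tfrac12\sin(2\gamma)$, the destabilizing drift $\tfrac{k_1}{2}\sin(2\gamma)$ cancels and I expect the clean reduction $\dot z = -\tfrac{k_1k_3}{k_2}\mathrm{sinc}^2(2\gamma)\,z$.

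Next I would take $V = \rho^2 + z^2 + q^2\gamma^2$ and compute its derivative along the closed loop, which after the cancellation above reads
$$\dot V = -2k_1\rho^2\cos^2\gamma - \tfrac{2k_1k_3}{k_2}\mathrm{sinc}^2(2\gamma)z^2 - 2q^2k_2\gamma^2 - 2q^2k_3\gamma\,\mathrm{sinc}(2\gamma)z.$$
The decisive manipulation is to introduce $(a,b) := (\mathrm{sinc}(2\gamma)z,\ \gamma)$, in which the last three terms become a \emph{constant}-coefficient quadratic form $-\tfrac{2k_1k_3}{k_2}a^2 - 2q^2k_2 b^2 - 2q^2k_3 ab$. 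Imposing $q^2 = k_1/k_3$ collapses this to $-2k_1\bigl(\tfrac{k_3}{k_2}a^2 + ab + \tfrac{k_2}{k_3}b^2\bigr)$, whose symmetric matrix has positive diagonal and determinant $\tfrac{k_3}{k_2}\cdot\tfrac{k_2}{k_3} - \tfrac14 = \tfrac34 > 0$, hence is negative definite for all gains. (More generally any $q^2 < 4k_1/k_3$ suffices; the stated value is simply the cleanest.)

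To conclude strict negativity I would check that $\dot V = 0$ forces the origin: the quadratic form vanishes only at $a = b = 0$, i.e. $\gamma = 0$ and $\mathrm{sinc}(0)z = z = 0$; at the degenerate headings where $\mathrm{sinc}(2\gamma) = 0$ one has $\gamma \neq 0$, so the $b^2$ term is strictly negative; and $\gamma = 0$ makes $\cos^2\gamma = 1$, so $-2k_1\rho^2\cos^2\gamma$ forces $\rho = 0$. This settles property~\ref{it:clf_prop_2} of Definition~\ref{def-CLF}, since that inequality is exactly $\dot V$ for the open loop written through $(v/\rho,\omega)$, here realized by $(v/\rho,\omega) = (k_1\cos\gamma,\ \tfrac{k_1}{2}\sin(2\gamma)+\tilde\omega)$. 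Property~\ref{it:clf_prop_1} follows from continuity, positive definiteness, and radial unboundedness of $V$ with respect to $|\cdot|_{\mathcal S}$: near the origin $\mathrm{Si}(2\gamma)\sim 2\gamma$ so $z$ vanishes smoothly, while for large argument $\mathrm{Si}$ is bounded, so $|\delta|\to\infty$ still drives $z^2\to\infty$; positive definiteness and properness then yield the class-$\mathcal K$ bounds $\bar\alpha_1,\bar\alpha_2$.

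Finally, since $V$ is proper and positive definite and $\dot V$ is negative definite on the forward-invariant set $\mathcal S$ — note $\dot\rho = -k_1\rho\cos^2\gamma$ keeps $\rho>0$ and the feedback $v=k_1\rho\cos\gamma$ removes the $1/\rho$ singularity — the Barbashin–Krasovskii theorem yields global asymptotic stability, which for this time-invariant system is uniform and therefore equivalent to the $\mathcal{KL}$ estimate demanded by Definition~\ref{def-our-GAS}. I expect the main obstacle to be precisely the indefinite cross term $-2q^2k_3\gamma\,\mathrm{sinc}(2\gamma)z$ together with the directions where $\cos\gamma$ or $\mathrm{sinc}(2\gamma)$ vanishes: a lossy Young's inequality only delivers negative \emph{semi}definiteness, whereas the exact reformulation as a quadratic form in $(\mathrm{sinc}(2\gamma)z,\gamma)$ is what makes the sign analysis tight and pins down $q^2 = k_1/k_3$. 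A secondary subtlety is that the vanishing of $\cos^2\gamma$ and $\mathrm{sinc}^2(2\gamma)$ blocks a direct comparison-lemma bound $\dot V \le -\alpha_3(V)$, so I would obtain the $\mathcal{KL}$ bound through the GAS-equivalence rather than by integrating a scalar comparison ODE.
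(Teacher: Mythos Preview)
Your proposal is correct and matches the paper's proof almost line for line: the same forwarding coordinate $\zeta=z$, the same $V$, and the same $\dot V$, which the paper simply packages as the sum of squares $-\tfrac{k_1k_2}{k_3}\bigl[(\tfrac{k_3}{k_2}a)^2 + b^2 + (\tfrac{k_3}{k_2}a+b)^2\bigr]$ in your variables $(a,b)$ rather than checking a $2\times2$ determinant. The only real deviation is the last step: the paper \emph{does} take the comparison-lemma route you call blocked, obtaining $\dot V\le -\alpha_3\circ\bar\alpha_2^{-1}(V)$ with $\alpha_3\in\mathcal K$ via Khalil's Lemmas~4.3--4.4 --- this is legitimate because, as your own case analysis shows, $-\dot V(x_n)\to 0$ forces $x_n\to 0$, so $-\dot V$ admits a global class-$\mathcal K$ lower bound even though it is not radially unbounded.
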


\begin{proof}
Consider the model \eqref{eq:unicycle_polar_closed_loop-Gv-3-pass} and the 
forwarding transformation
\begin{equation}
\zeta =\delta+ \dfrac{k_1}{2k_2}\mbox{Si} (2\gamma) \,, \label{eq:forwarding_trans1}
\end{equation} with the function ${\rm Si}$ as defined in \eqref{eq:Si_function}.
Then, the transformed open-loop system is
\begin{eqnarray}
\dot\zeta &=& \frac{k_1}{k_2}\mbox{sinc}(2\gamma)(k_2\gamma -\tilde\omega) \,,
\\
\dot\gamma &=& - \tilde\omega\,.
\end{eqnarray}
Choosing the control law as in \eqref{eq:GloFo} yields
the closed-loop system
\begin{subequations}
    \label{eq:GloFo_temp}
\begin{eqnarray}
\dot\zeta &=& -\frac{k_1k_3}{k_2}\mbox{sinc}^2(2
\gamma) \zeta\,,
\\
\dot\gamma &=& - k_2\gamma - k_3\mbox{sinc}(2\gamma)\zeta\,. 
\end{eqnarray}
\end{subequations}
We choose the Lyapunov function in \eqref{eq:V_CLF_GloFo_gamma_delta}, which due to it being positive and radially unbounded for all $(\rho,\delta,\gamma) \ne (0,0,0)$ in $\{\rho \ge 0\} \times \mathcal{T}$,  satisfies $ \bar\alpha_1(|(\rho,\delta, \gamma)|_{\mathcal{S}}) \le V(\rho,\delta,\gamma) \le \bar\alpha_2(|(\rho,\delta, \gamma)|_{\mathcal{S}})$ for some class $\mathcal{K}_{\infty}$ functions $\bar{\alpha}_{1}$ and $\bar{\alpha}_{2}$ (see \cite[Lemma 4.3]{khalil_nonlinear_2002}). Hence, \eqref{eq:V_CLF_GloFo_gamma_delta} satisfies property~\ref{it:clf_prop_1} of Def.~\ref{def-CLF}. 
Its time derivative along the trajectories of \eqref{eq:unicycle_polar_closed_loop-Gv-2}, \eqref{eq:GloFo_temp} is 
\begin{equation}
\hspace*{-0.2cm}
\begin{aligned}[b]
\dot V   =- 2k_1 \rho^2 \cos^2 \gamma - &\frac{k_1k_2}{k_3} \left[
\left(\frac{k_3}{k_2}\mbox{sinc}(2\gamma) \zeta\right)^2 +\gamma^2 \right.  \\ &\left. + \left(\frac{k_3}{k_2}\mbox{sinc}(2\gamma) \zeta +\gamma \right)^2 
\right] \,, \label{eq:V_dot_Glofo}
\end{aligned}
\end{equation}
which is negative for all $(\rho, \delta, \gamma) \ne (0,0,0)$ in $\{\rho \ge 0\} \times \mathcal{T}$. With the control laws \eqref{eq-basic-v-control}, \eqref{eq-omega-general} and \eqref{eq:GloFo} as well as  \eqref{eq:V_dot_Glofo}, we show that property~\ref{it:clf_prop_2} of Def.~\ref{def-CLF} holds and thus \eqref{eq:V_CLF_GloFo_gamma_delta} is a CLF.
Then, again
based on \cite[Lemma 4.3]{khalil_nonlinear_2002},  
there exists 
$\alpha_3\in\mathcal{K}$, such that $\dot{V}(\rho,\delta,\gamma) \le - \alpha_3 \circ \bar \alpha_2^{-1}
(V)$, where $
\alpha_3 \circ \bar\alpha_2^{-1}\in\mathcal{K}$. 
Furthermore, this implies the existence of  $\beta\in\mathcal{KL}$such that
$V(t) \le \beta(V_0, t)$ 
 for all $t\geq 0$, according to \cite[Lemma 4.4]{khalil_nonlinear_2002}. The inequality $V(t) \le \beta(V_0, t)$ further implies that 
\begin{equation}
\abs{(\rho,\delta,\gamma)}_{\mathcal{S}} \le \bar{\alpha}_1^{-1}(\beta( \bar{\alpha}_2(\abs{(\rho_0, \delta_0,\gamma_0)}_{\mathcal{S}}),t)) \,, \label{eq:KL_estimate} 
\end{equation}
where $\bar{\alpha}_1^{-1}(\beta( \bar{\alpha}_2(r),t))$ is class $\mathcal{KL}$ in $(r,t)$.
Thus, 
$\rho =  \delta = \gamma = 0$ is GAS on $\mathcal{S}$.
\end{proof}

Theorem \ref{thm:CLF_GloFo} establishes global asymptotic stability in the polar coordinates $(\rho,\delta,\gamma)$. One should not expect that GAS also follows from Theorem~\ref{thm:CLF_GloFo} for the closed-loop system in the Cartesian coordinates $(x,y,\theta)$. However, attractivity of the point $(x,y,\theta)=(0,0,0)$ does follow. 

\begin{corollary}
\label{cor-attractivity}
\textit Consider the control system 
\eqref{eq:unicycle_cartesian} in closed loop with the feedback laws \eqref{eq-basic-v-control}, \eqref{eq-omega-general},  \eqref{eq:GloFo}.
For all initial conditions $(x_0,y_0,\theta_0)\in\mathbb{R}^3$ such that $x_0^2 + y_0^2>0$, 
\begin{align}
&\abs{x(t)} + \abs{y(t)} + \abs{\theta(t)}\nonumber\\
&\le  {\beta}(\abs{x_0} + \abs{y_0} + \abs{\theta_0}  + \abs{{\rm atan2}(y_0,x_0) + \pi},t)\,,  \label{eq:KL-estimate_xytheta}
\end{align}
for all $t\in[0,\infty)$, where $\beta\in\mathcal{KL}$  is defined as  $\beta(r,t) := \sqrt{2}\bar{\alpha}_1^{-1}(\beta( \bar{\alpha}_2(2 r),t))$.
\end{corollary}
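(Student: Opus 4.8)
The plan is to transfer the class-$\mathcal{KL}$ estimate \eqref{eq:KL_estimate}, already established in polar coordinates by Theorem~\ref{thm:CLF_GloFo}, to the Cartesian state. This requires two comparison inequalities: bounding the Cartesian coordinates \emph{above} by the polar metric $|\cdot|_{\mathcal{S}}$ along the trajectory, and bounding the \emph{initial} polar metric above by the Cartesian initial data. First I would note that $x_0^2+y_0^2>0$ is precisely $\rho_0>0$, so the initial condition lifts to a well-defined polar state $(\rho_0,\delta_0,\gamma_0)$ with $\delta_0={\rm atan2}(y_0,x_0)+\pi$ and $\gamma_0=\delta_0-\theta_0$. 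Moreover $\dot\rho=-k_1\rho\cos^2\gamma$ yields $\rho(t)=\rho_0\exp\!\bigl(-k_1\int_0^t\cos^2\gamma\,{\rm d}s\bigr)>0$ for all finite $t$, so the trajectory remains in $\mathcal{S}$ and the polar representation stays valid throughout.

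Next I would derive the comparison inequalities. From $x=-\rho\cos\delta$, $y=-\rho\sin\delta$ and $|a|+|b|\le\sqrt{2}\sqrt{a^2+b^2}$ I get $|x|+|y|\le\sqrt{2}\,\rho$, while $\theta=\delta-\gamma$ gives $|\theta|\le|\delta|+|\gamma|$, hence
\begin{equation*}
|x(t)|+|y(t)|+|\theta(t)|\le\sqrt{2}\bigl(\rho(t)+|\delta(t)|+|\gamma(t)|\bigr)=\sqrt{2}\,|(\rho(t),\delta(t),\gamma(t))|_{\mathcal{S}}\,.
\end{equation*}
For the initial data, $\rho_0=\sqrt{x_0^2+y_0^2}\le|x_0|+|y_0|$, $|\delta_0|=|{\rm atan2}(y_0,x_0)+\pi|$, and $|\gamma_0|\le|\delta_0|+|\theta_0|$, so that
\begin{equation*}
|(\rho_0,\delta_0,\gamma_0)|_{\mathcal{S}}\le|x_0|+|y_0|+|\theta_0|+2\,|{\rm atan2}(y_0,x_0)+\pi|\le 2r\,,
\end{equation*}
where $r:=|x_0|+|y_0|+|\theta_0|+|{\rm atan2}(y_0,x_0)+\pi|$.

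Finally I would chain these with \eqref{eq:KL_estimate}. Since $\bar\alpha_1^{-1},\bar\alpha_2\in\mathcal{K}$ and $\beta(\cdot,t)$ is nondecreasing in its first argument, substituting the initial-data bound on the correct monotone side of each function yields
\begin{equation*}
|x(t)|+|y(t)|+|\theta(t)|\le\sqrt{2}\,\bar\alpha_1^{-1}\!\bigl(\beta(\bar\alpha_2(2r),t)\bigr)\,,
\end{equation*}
which is exactly \eqref{eq:KL-estimate_xytheta}; that the right-hand side is class $\mathcal{KL}$ in $(r,t)$ follows from the same composition argument used at the end of the proof of Theorem~\ref{thm:CLF_GloFo}.

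I expect the only delicate point to be conceptual rather than computational. Because ${\rm atan2}$ jumps by $2\pi$ across the ray $\{x<0,\,y=0\}$, the term $|\delta_0|=|{\rm atan2}(y_0,x_0)+\pi|$ (and hence the argument $r$) is discontinuous in the Cartesian initial data, so the derived bound certifies \emph{attractivity} of $(x,y,\theta)=(0,0,0)$ but \emph{not} Lyapunov stability in Cartesian coordinates—fully consistent with the Brockett obstruction noted earlier. I would therefore be careful to state the conclusion as attractivity, and to verify at each substitution that the bound is inserted on the nondecreasing side of the relevant class-$\mathcal{K}$/$\mathcal{KL}$ function.
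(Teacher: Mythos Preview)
Your proposal is correct and follows essentially the same route as the paper: the paper's argument is precisely to combine the polar $\mathcal{KL}$ estimate \eqref{eq:KL_estimate} with the two-sided comparison $\tfrac{1}{\sqrt{2}}(|x|+|y|+|\theta|)\le|(\rho,\delta,\gamma)|_{\mathcal{S}}\le|x|+|y|+|\theta|+2|\delta|$, which you derive in full detail. Your additional justification that $\rho(t)>0$ for all finite $t$ and your concluding remark on attractivity-versus-stability are welcome elaborations but do not depart from the paper's approach.
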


This corollary follows from \eqref{eq:KL_estimate}, with the aid of the inequalities 
$\frac{1}{\sqrt{2}} (\abs{x} + \abs{y} + \abs{\theta}) \le \abs{(\rho,\delta,\gamma)}_{\mathcal{S}} \le \abs{x} + \abs{y} + \abs{\theta}  + 2\abs{\delta}$,  established with the inverse polar coordinate transformation (see Fig.~\ref{fig:unicycle_cord}).
The achievement of attractivity, without stability, in the Cartesian coordinates $(x,y,\theta)$, in Corollary \ref{cor-attractivity}, is not a shortcoming of our design method. It is consistent with the result of \cite[Remark 1.6]{coron1994relation}, proven also in \cite[pg. 43]{praly2022fonctions}, that the unicycle, in the Cartesian representation, is impossible to stabilize by static feedback, even if such feedback is permitted to be discontinuous, as are the feedback laws \eqref{eq-basic-v-control}, \eqref{eq-omega-general} and  \eqref{eq:GloFo} in the Cartesian coordinates $(x,y,\theta)$, as well as all the other feedback laws in this paper. 
\subsection{Bounded-in-LoS-Angle by Forwarding (BoFo) Controller:}

\begin{theorem}
\label{thm:CLF_BoFo}
Consider the system \eqref{eq:unicycle_polar_closed_loop-Gv-1} in closed-loop with the feedback laws \eqref{eq-basic-v-control}, \eqref{eq-omega-general}, and 
\begin{equation}
\label{eq:BoFo}
\tilde\omega = k_2\sin\gamma + k_3\dfrac{\cos\gamma}{\left(1 + \tan^2 \dfrac{\gamma}{2}\right)^2}
\left(\delta + \dfrac{k_1}{k_2}\sin\gamma \right)\,,
\end{equation}
with arbitrary $k_1, k_2, k_3 > 0$.
The origin $\rho = \delta = \gamma = 0$ is GAS on $\mathcal{S}_1$ in accordance with Definition \ref{def-our-GAS}.
Furthermore, the continuously differentiable function
\begin{equation}
V(\rho,\delta, \gamma)  = 
\rho^2 + \left(\delta + \frac{k_1}{k_2} \sin \gamma \right)^2+ 4q^2 \tan ^2\frac{\gamma}{2} 
\,,  \label{eq:CLF_BoFo_gamma_delta}
\end{equation}
with $q = \sqrt{k_1/k_3}$ is a (globally) strict CLF for \eqref{eq:unicycle_polar_closed_loop-Gv-1} with respect to the input pair $(v/\rho,\omega)$ in the sense of Def. \ref{def-CLF}. 
\end{theorem}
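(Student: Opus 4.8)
The plan is to mirror the proof of Theorem~\ref{thm:CLF_GloFo}, treating the BoFo design as the ``$\tan$-type'' analogue of GloFo. First I would introduce the forwarding transformation
$$\zeta = \delta + \frac{k_1}{k_2}\sin\gamma,$$
the counterpart of \eqref{eq:forwarding_trans1} with $\frac{k_1}{2k_2}\mathrm{Si}(2\gamma)$ replaced by $\frac{k_1}{k_2}\sin\gamma$. Differentiating along \eqref{eq:unicycle_polar_closed_loop-Gv-3-pass} and writing $\dot\delta=\frac{k_1}{2}\sin(2\gamma)=\frac{k_1}{k_2}\cos\gamma\,(k_2\sin\gamma)$ gives the open-loop transformed dynamics $\dot\zeta=\frac{k_1}{k_2}\cos\gamma\,(k_2\sin\gamma-\tilde\omega)$, $\dot\gamma=-\tilde\omega$. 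Substituting \eqref{eq:BoFo} and using $1+\tan^2\frac{\gamma}{2}=\cos^{-2}\frac{\gamma}{2}$ yields $\dot\zeta=-\frac{k_1k_3}{k_2}h(\gamma)^2\zeta$, where I abbreviate $h(\gamma):=\cos\gamma\,\cos^2\frac{\gamma}{2}=\frac{\cos\gamma}{1+\tan^2\frac{\gamma}{2}}$.

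The key step is a second change of variable $\sigma:=2\tan\frac{\gamma}{2}$, a diffeomorphism of $\{|\gamma|<\pi\}$ onto $\mathbb{R}$ under which the $\mathcal{S}_1$ metric becomes the Euclidean $\rho+|\delta|+|\sigma|$. Using $\dot\sigma=(1+\tan^2\frac{\gamma}{2})\dot\gamma$, $\sin\gamma=\frac{2\tan(\gamma/2)}{1+\tan^2(\gamma/2)}$, and $\cos\gamma=\frac{1-\tan^2(\gamma/2)}{1+\tan^2(\gamma/2)}$, a direct computation shows that the $(\zeta,\sigma)$ closed loop is
$$\dot\zeta=-\frac{k_1k_3}{k_2}h(\gamma)^2\zeta,\qquad \dot\sigma=-k_2\sigma-k_3h(\gamma)\zeta,$$
which is \emph{structurally identical} to the GloFo closed loop \eqref{eq:GloFo_temp} under $\gamma\mapsto\sigma$ and $\mathrm{sinc}(2\gamma)\mapsto h(\gamma)$. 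Since $4q^2\tan^2\frac{\gamma}{2}=q^2\sigma^2$, the CLF \eqref{eq:CLF_BoFo_gamma_delta} likewise reads $V=\rho^2+\zeta^2+q^2\sigma^2$, matching \eqref{eq:V_CLF_GloFo_gamma_delta}. Hence the derivative computation transcribes verbatim and I would obtain the analogue of \eqref{eq:V_dot_Glofo},
$$\dot V=-2k_1\rho^2\cos^2\gamma-\frac{k_1k_2}{k_3}\Big[\big(\tfrac{k_3}{k_2}h(\gamma)\zeta\big)^2+\sigma^2+\big(\tfrac{k_3}{k_2}h(\gamma)\zeta+\sigma\big)^2\Big].$$

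I would then verify the two CLF properties. For property~\ref{it:clf_prop_1}, in the $(\rho,\delta,\sigma)$ coordinates $V$ is positive definite and radially unbounded with respect to $\rho+|\delta|+|\sigma|$: since $\zeta$ differs from $\delta$ only by the bounded term $\frac{k_1}{k_2}\sin\gamma$, one has $|\delta|\to\infty\Leftrightarrow|\zeta|\to\infty$, and properness as $|\gamma|\to\pi$ is carried entirely by $q^2\sigma^2$, so Lemma~4.3 of \cite{khalil_nonlinear_2002} supplies the class-$\mathcal{K}_\infty$ bounds $\bar\alpha_1,\bar\alpha_2$ in the $\mathcal{S}_1$ metric. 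For property~\ref{it:clf_prop_2}, $\dot V\le 0$ (each $h(\gamma)$ enters squared, so its sign is immaterial) and vanishes only when $\sigma=0$, i.e. $\gamma=0$, and $h(\gamma)\zeta=0$; since $h(0)=1$ this forces $\zeta=\delta=0$, while $\gamma=0$ gives $\cos\gamma=1$ so the $\rho$-term forces $\rho=0$—hence strict decrease off the origin, witnessed by the chosen feedback, and $V$ is a CLF. Finally, Lemmas~4.3--4.4 of \cite{khalil_nonlinear_2002} yield a $\mathcal{KL}$ bound on $V$ and thus the estimate $|(\rho,\delta,\gamma)|_{\mathcal{S}_1}\le\bar\alpha_1^{-1}(\beta(\bar\alpha_2(|(\rho_0,\delta_0,\gamma_0)|_{\mathcal{S}_1}),t))$, giving GAS on $\mathcal{S}_1$ per Definition~\ref{def-our-GAS}, exactly as in Theorem~\ref{thm:CLF_GloFo}. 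The main obstacle I anticipate is identifying the regularizing coordinate $\sigma=2\tan\frac{\gamma}{2}$ and confirming that this single substitution simultaneously (i) renders the $\mathcal{S}_1$ metric Euclidean, (ii) converts \eqref{eq:BoFo} into the GloFo-type closed loop, and (iii) makes $V$ proper up to the singular heading $|\gamma|=\pi$; once this is in hand, the remainder is a transcription of the GloFo argument.
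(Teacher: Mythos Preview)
Your proposal is correct and follows essentially the same approach as the paper: the same forwarding variable $\zeta=\delta+\frac{k_1}{k_2}\sin\gamma$, the same closed-loop computation, the same Lyapunov function, and the same $\dot V$ expression (the paper's $2\tan\frac{\gamma}{2}$ and $4\tan^2\frac{\gamma}{2}$ are precisely your $\sigma$ and $\sigma^2$). Your explicit introduction of the coordinate $\sigma=2\tan\frac{\gamma}{2}$ is a nice presentational device that makes the structural identity with the GloFo closed loop \eqref{eq:GloFo_temp} transparent, but it is not a different route---the paper carries out the identical calculation without naming $\sigma$.
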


\begin{proof}
Consider the model \eqref{eq:unicycle_polar_closed_loop-Gv-3-pass}
with the forwarding transformation
\begin{equation}
\zeta = \delta + \frac{k_1}{k_2}\int_0^{\tan\frac{\gamma}{2}} \frac{\sin(4\arctan\sigma)}{2\sigma}{\rm d}\sigma = \delta + \dfrac{k_1}{k_2} \sin \gamma \,. \label{eq:forwarding_trans2}
\end{equation}
Then, the transformed open-loop system is
\begin{eqnarray}
\dot\zeta &=& \frac{k_1}{k_2}\cos\gamma\left(k_2\sin\gamma -
\tilde\omega \right) \,,
\\
\dot\gamma &=& -  \tilde\omega\,.
\end{eqnarray}
Considering the control law \eqref{eq:BoFo}, 
the closed-loop system is
\begin{subequations}
    \label{eq:BoFo_temp}
\begin{eqnarray}
\dot\zeta &=& -\frac{k_1k_3}{k_2}\dfrac{\cos^2(\gamma)}{\left(1+\tan^2\frac{\gamma}{2}\right)^2} \zeta \,,
\\
\dot\gamma &=& - k_2\sin\gamma - k_3\dfrac{\cos(\gamma)}{\left(1+\tan^2\frac{\gamma}{2}\right)^2}\zeta\,.
\end{eqnarray}
\end{subequations}
We choose the Lyapunov function \eqref{eq:CLF_BoFo_gamma_delta} which is positive and radially unbounded for all $(\rho,\delta,\gamma) \ne (0,0,0)$ in $\{\rho \ge 0\} \times \mathcal{T}_1$, and satisfies $ \bar\alpha_1(|(\rho,\delta, \gamma)|_{\mathcal{S}_1}) \le V(\rho,\delta,\gamma) \le \bar\alpha_2(|(\rho,\delta, \gamma)|_{\mathcal{S}_1})$ for some class $\mathcal{K}_{\infty}$ functions $\bar{\alpha}_{1}$ and $\bar{\alpha}_{2}$.
Its time derivative along the trajectories of \eqref{eq:unicycle_polar_closed_loop-Gv-1}, \eqref{eq:BoFo_temp} is
\begin{equation}
\begin{aligned}[b]
\dot V  = &-2k_1 \rho^2 \cos^2 \gamma- \frac{k_1k_2}{k_3}\Biggl[
\left(\frac{k_3}{k_2}\frac{\cos\gamma}{{1+\tan^2\frac{\gamma}{2}}} \zeta\right)^2\\
&+4\tan^2\frac{\gamma}{2}
+ \left(\frac{k_3}{k_2}\frac{\cos\gamma}{{1+\tan^2\frac{\gamma}{2}}}\zeta +2\tan\frac{\gamma}{2} \right)^2
\Biggr]\label{eq:V_dot_forward2}
\end{aligned}\,,
\end{equation}
which is negative for all $(\rho,\delta,\gamma) \ne (0,0,0)$ in $\{\rho \ge 0\} \times \mathcal{T}_1$.
Using similar arguments as in the proof of Theorem~\ref{thm:CLF_GloFo}, we can show that \eqref{eq:CLF_BoFo_gamma_delta} is a CLF in the sense of Def.~\ref{def-CLF}, as well as 
$\abs{(\rho,\delta,\gamma)}_{\mathcal{S}_1} \le \bar{\alpha}_1^{-1}(\beta( \bar{\alpha}_2(\abs{(\rho_0, \delta_0,\gamma_0)}_{\mathcal{S}_1}),t))$, 
where $\bar{\alpha}_1^{-1}(\beta( \bar{\alpha}_2(r),t))$ is class $\mathcal{KL}$ in $(r,t)$.
Consequently, 
$\rho =  \delta = \gamma = 0$ is GAS on $\mathcal{S}_1$.
\end{proof}

\subsection{On Global Forwarding Designs with Actuated Velocity}

\begin{figure}[t]
\centering
\begin{subfigure}[b]{\linewidth}
\centering
\includegraphics[width=.9\linewidth]{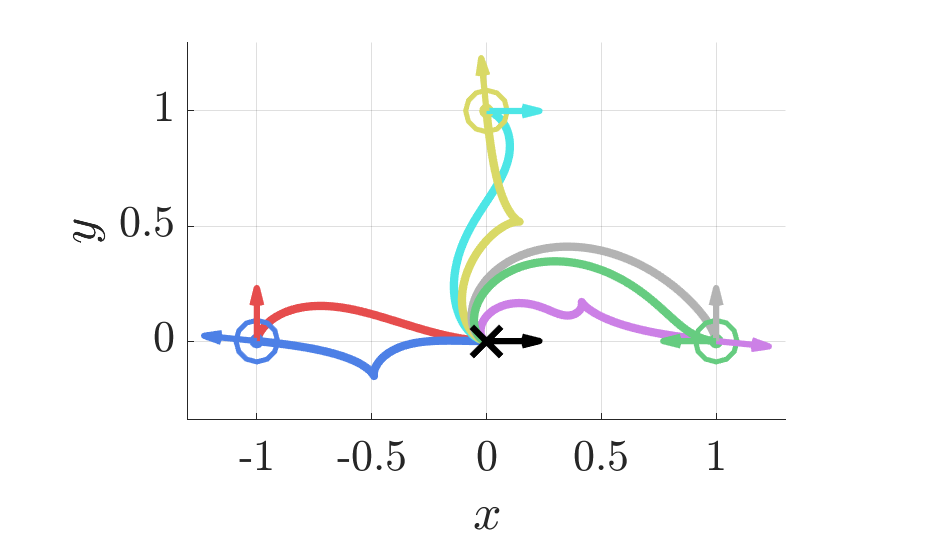}
\caption{Robot Cartesian trajectories for multiple representative initial conditions. Notably, the blue, yellow, and purple initial conditions are slightly offset from $|\gamma(0)| = \pi$ due to the bounded LoS angle constraint.}
\vspace{0.5em}
\label{fig:trajectory_bofo}
\end{subfigure}
\begin{subfigure}[b]{\linewidth}
\centering
\includegraphics[width=0.9\linewidth]{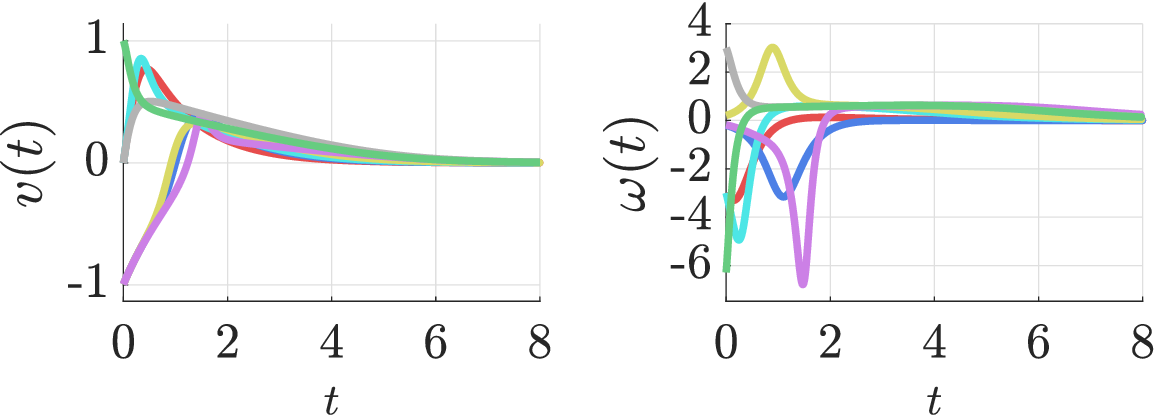}
\caption{Forward velocity $v$ and steering input $\omega$.}
\vspace{0.5em}
\label{fig:control_bofo}
\end{subfigure}
\caption{Simulation with the BoFo control law~\eqref{eq:BoFo} along with \eqref{eq-basic-v-control} and \eqref{eq-omega-general} from Theorem~\ref{thm:CLF_BoFo} with $(k_1,k_2,k_3,k_4) = (1,3,2,1.5)$.}
\label{fig:sim_bofo}
\end{figure}

The globally asymptotically stabilizing forwarding controller \eqref{eq:GloFo}, referred to as GloFo, has an analogous backstepping version, the GloBo controller \cite{todorovski2025_CLF}, the region of attraction of which is \eqref{eq:ss_S}. The simulation results are almost identical to the GloBo control law in \cite{todorovski2025_CLF} and hence omitted.

The backstepping method developed in \cite{todorovski2025_CLF} does not allow for the construction of barrier CLFs that ensure invariance of sublevel sets in which the LoS angle $\gamma$ remains bounded. To address this limitation, we introduce the forwarding-based controller \eqref{eq:BoFo} which, in analogy to the passivity-based BoLSA controller in \cite{todorovski2025_CLF}, explicitly bounds the LoS angle $\gamma$. For this reason, it is referred to as the Bounded LoS angle by Forwarding (BoFo) controller. Much like BoLSA, BoFo has the region of attraction \eqref{eq:ss_S1}, ensuring that the unicycle never fully turns its back to the goal position while parking, as long as the initial heading is not exactly opposite to the target. The numerical simulations for the BoFo controller are given in Fig.~\ref{fig:sim_bofo}. Interestingly, with the forwarding technique, the polar angle $\delta$ cannot be bounded, because it enters linearly in the forwarding transformations \eqref{eq:forwarding_trans1} and \eqref{eq:forwarding_trans2}, making the limitations of forwarding and backstepping mirror each other. 

The modular design framework of Section~\ref{sec:modular_section} plays a crucial role: it reveals the strict feedforward structure of \eqref{eq:unicycle_polar_closed_loop-Gv-3-pass}, which makes integrator forwarding applicable. Consequently, the Lyapunov functions \eqref{eq:V_CLF_GloFo_gamma_delta} and \eqref{eq:CLF_BoFo_gamma_delta}—that is, the CLFs associated with GloFo and BoFo without the $\rho^2$ term—can be incorporated into a composite Lyapunov framework \cite{todorovski2025_CLF}. This not only enriches the pool of available CLFs but also facilitates the design of inverse optimal controllers, as exploited in \cite{Kim2025_IOC}.

\section{Deadbeat Parking with Integrator Forwarding}\label{sec-forward}

Fixing the forward velocity $v$ to a positive constant gives \eqref{eq:unicycle_polar_closed_loop-Gv-1} a strict feedforward form, which allows the application of integrator forwarding to the whole system. Since \eqref{eq:unicycle_polar_closed_loop-Gv-3} is independent of $\rho$, the steering can be designed modularly, as in the previous sections. This enables the development of finite-time stabilizing control laws—achieving deadbeat parking—for the Dubins vehicle. While several such laws have been introduced in \cite{Krstic2025_Dubins}, we present additional results by employing integrator forwarding. In this regard, we first recall the following lemmas.

\begin{lemma}[\!\cite{Krstic2025_Dubins}]
\label{lem1}
Let $a>0$ and let $V:[0,\rho_0]\to\mathbb{R}_{\ge 0}$ be a continuously differentiable function. Then, for $\rho\in(0,\rho_0]$,
\begin{itemize}
\item $\dfrac{dV}{d\rho} \geq \dfrac{a}{\rho}V$ implies $V(\rho)\leq V(\rho_0) \left(\dfrac{\rho}{\rho_0}\right)^a$\,;
\vspace{0.2cm}
\item $\dfrac{dV}{d\rho} \geq \dfrac{a}{\rho^2}V$ implies $V(\rho)\leq V(\rho_0) {\rm e}^{a\left(1/{\rho_0}-{1}/{\rho}\right)}$.
\end{itemize}
\end{lemma}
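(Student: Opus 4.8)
The plan is to reduce each of the two differential inequalities to a monotonicity statement for an auxiliary function built from $V$ via an appropriate integrating factor, and then to compare the values of that auxiliary function at $\rho$ and at $\rho_0$. This is the standard one-sided (Gronwall-type) comparison argument for scalar inequalities, and it handles both items with essentially the same mechanism.

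For the first item, I would introduce $W(\rho) := V(\rho)\,\rho^{-a}$ on $(0,\rho_0]$ and differentiate to get $W'(\rho) = \rho^{-a}\bigl(V'(\rho) - \tfrac{a}{\rho}V(\rho)\bigr)$. The hypothesis $\tfrac{dV}{d\rho}\ge\tfrac{a}{\rho}V$ makes the parenthesized term nonnegative, so $W'\ge 0$ and $W$ is nondecreasing on $(0,\rho_0]$. Consequently $W(\rho)\le W(\rho_0)$ for every $\rho\in(0,\rho_0]$, which is precisely $V(\rho)\rho^{-a}\le V(\rho_0)\rho_0^{-a}$; multiplying by $\rho^a>0$ yields the claimed bound $V(\rho)\le V(\rho_0)(\rho/\rho_0)^a$.

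For the second item, I would repeat the construction with the integrating factor $e^{a/\rho}$, setting $W(\rho):=V(\rho)\,e^{a/\rho}$. Differentiating gives $W'(\rho)=e^{a/\rho}\bigl(V'(\rho)-\tfrac{a}{\rho^2}V(\rho)\bigr)$, and the hypothesis $\tfrac{dV}{d\rho}\ge\tfrac{a}{\rho^2}V$ again forces $W'\ge 0$, so $W$ is nondecreasing. Monotonicity then delivers $W(\rho)\le W(\rho_0)$, i.e. $V(\rho)e^{a/\rho}\le V(\rho_0)e^{a/\rho_0}$, and dividing by $e^{a/\rho}>0$ produces $V(\rho)\le V(\rho_0)\,e^{a(1/\rho_0-1/\rho)}$.

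I do not expect a genuine obstacle, since both claims are instances of the integrating-factor technique and the rearrangements involve only multiplication by positive quantities (no division by $V$, so the sign of $V$ is never an issue beyond the stated nonnegativity). The single point deserving mild care is that both integrating factors, $\rho^{-a}$ and $e^{a/\rho}$, are singular at $\rho=0$; accordingly, the monotonicity of $W$ is asserted only on the open-ended interval $(0,\rho_0]$. Because the desired conclusion is itself stated for $\rho\in(0,\rho_0]$, no limiting analysis at the origin is required: for each fixed $\rho\in(0,\rho_0]$ the comparison $W(\rho)\le W(\rho_0)$ follows from $W'\ge 0$ on the compact subinterval $[\rho,\rho_0]$ via the fundamental theorem of calculus, which is all that is needed.
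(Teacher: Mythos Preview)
Your argument is correct: the integrating-factor construction turns each differential inequality into a monotonicity statement for $W$, and the comparison $W(\rho)\le W(\rho_0)$ on $(0,\rho_0]$ yields the desired bounds without any division by $V$ or limiting analysis at $\rho=0$. The paper itself does not supply a proof of this lemma---it is quoted from \cite{Krstic2025_Dubins}---so there is no in-paper argument to compare against; your Gronwall-type treatment is the standard one and would be the expected proof.
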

\vspace{0.2cm}
\begin{lemma}[\!\cite{Krstic2025_Dubins}]
\label{lem2}
Consider $\dot\rho = - v\cos\gamma$, for $v>0$. Let $\alpha\in\mathcal{K}$ and let
\begin{equation}
t_1 = \frac{\rho_0}{v}\sqrt{1+\alpha(1)}\,.
\end{equation}
For all $t\in [0,T)$  for which the solution exists, if $\cos\gamma(t)>0$ and $\tan^2\gamma(t)\leq \alpha(\rho(t)/\rho_0)$ for all $t\in[0,T)$, then $\rho(t)\leq \rho_0(1-t/t_1)$ for all $t\in[0,T)$.
\end{lemma}


\begin{theorem}
\label{thm:Dubins-FT-stabilize6}
Let $v=\mbox{\rm const}>0$. Consider \eqref{eq:unicycle_polar_closed_loop-Gv-1}
with the control law 
\begin{equation}
    \omega =\frac{v}{\rho} \left(\sin\gamma +\cos^3\gamma \, \bar\omega\right)
\end{equation}
and
\begin{subequations}
\label{eq-third-controller6}
\begin{eqnarray}
\label{eq-third-controller26}
\bar\omega &=& c_1\tan\gamma+c_2\zeta \,,
\\
\label{eq-third-controller36}
\zeta &=& \tan\gamma + c_1\delta\,,
\end{eqnarray}
\end{subequations}
with $\underline c :=\min\{c_1,c_2\} > 2$. 
For all $\rho_0>0$, $\delta_0\in\mathbb{R}$, and $\gamma_0\in(-\pi/2,\pi/2)$ the following holds:
\begin{equation}
\label{eq2-rho-bound6}
\rho(t)\leq \rho_0 (1-t/t_1) \,,
\end{equation}
\begin{equation}
\label{eq2-deltan-bound6}
B^2(t) \leq 
2c_1 c_2 ({1-t/t_1})^{\underline c}
B_0^2 \,,
\end{equation}
\begin{align}
|\omega(t)| \leq \frac{v}{\rho_0}&(1+c_1 +c_2+c_1c_2)\times \nonumber \\ &\sqrt{2c_1 c_2}  (1-t/t_1)^{\underline c/2 -1} \label{eq2-omega(t)-bound6}
B_0 \,,
\end{align}
for all $t\in\left[0, \min\left\{t_1,T\right\} \right)$, where 
\begin{equation}
\label{eq-t1N1beta16}
t_1 =\frac{\rho_0}{v}{\sqrt{1+2c_1 c_2B_0^2}}\,,
\end{equation}
and $T$ is the interval of existence of the solutions, that is $\rho(T)=0$, and $B_0 = B(\delta_0,\gamma_0)$, with $B(\delta,\gamma) \coloneqq \sqrt{\delta^2 + \tan^2\gamma}$.
\end{theorem}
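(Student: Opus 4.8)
The plan is to exploit the constant forward velocity by using $\rho$ itself as the independent variable, turning the steering dynamics into an autonomous \emph{linear} system in $\rho$ to which Lemma~\ref{lem1} applies, and then feeding the resulting bound on $\tan^2\gamma$ into Lemma~\ref{lem2} to obtain the finite-time bound on $\rho$. First I would substitute the steering law into \eqref{eq:unicycle_polar_gammadot}; the term $v\sin\gamma/\rho$ cancels exactly, leaving the closed loop $\dot\gamma = -\tfrac{v}{\rho}\cos^3\gamma\,\bar\omega$. Since $\gamma_0\in(-\pi/2,\pi/2)$ we have $\cos\gamma_0>0$, and I would work on the maximal interval where $\cos\gamma>0$ (so that $\rho$ is strictly decreasing and the change of variable $t\mapsto\rho$ is legitimate), closing the argument at the end by noting that the a priori bound on $\tan^2\gamma$ keeps $|\gamma|$ away from $\pi/2$ until $\rho\to 0$.

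Using $\dot\rho=-v\cos\gamma$ and $\tfrac{d}{d\rho}=\dot\rho^{-1}\tfrac{d}{dt}$, a short computation gives, with $p:=\tan\gamma$, the linear system $\tfrac{d\delta}{d\rho}=-p/\rho$ and $\tfrac{dp}{d\rho}=\bar\omega/\rho=((c_1+c_2)p+c_1c_2\delta)/\rho$. The key structural observation is that this decouples in the variables $\zeta=c_1\delta+p$ (which is exactly \eqref{eq-third-controller36}) and $\xi:=c_2\delta+p$, since $\tfrac{d\zeta}{d\rho}=\tfrac{c_2}{\rho}\zeta$ and $\tfrac{d\xi}{d\rho}=\tfrac{c_1}{\rho}\xi$. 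Hence $\zeta^2$ and $\xi^2$ each satisfy the hypothesis of the first part of Lemma~\ref{lem1} (with $a=2c_2$ and $a=2c_1$), giving $\zeta^2(\rho)\le\zeta_0^2(\rho/\rho_0)^{2c_2}$ and $\xi^2(\rho)\le\xi_0^2(\rho/\rho_0)^{2c_1}$. Equivalently, and to cover the resonant case $c_1=c_2$ where this diagonalization degenerates, one checks that the positive-definite form $W=\zeta^2+c_1c_2\delta^2$ satisfies $\tfrac{dW}{d\rho}\ge\tfrac{\underline c}{\rho}W$ for all $c_1,c_2>0$ (the discriminant test reduces to $c_1c_2\le(2c_1-\underline c)(2c_2-\underline c)$, which holds in every ordering of $c_1,c_2$, with equality iff $c_1=c_2$), so Lemma~\ref{lem1} yields a uniform $(\rho/\rho_0)^{\underline c}$ decay.

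From either estimate I would pass to $B^2=\delta^2+\tan^2\gamma$ and collect constants to reach $B^2(\rho)\le 2c_1c_2B_0^2(\rho/\rho_0)^{\underline c}$; pinning down exactly the constant $2c_1c_2$ uniformly in $c_1,c_2$ is the part I expect to be the main obstacle, most likely resolved by bounding $\delta^2$ and $\tan^2\gamma$ separately, each by $c_1c_2B_0^2(\rho/\rho_0)^{\underline c}$, so the factor $2$ is simply the sum. Since $\tan^2\gamma\le B^2\le 2c_1c_2B_0^2(\rho/\rho_0)^{\underline c}=:\alpha(\rho/\rho_0)$ with $\alpha\in\mathcal{K}$ (as $\underline c>2>0$) and $\alpha(1)=2c_1c_2B_0^2$, Lemma~\ref{lem2} delivers \eqref{eq2-rho-bound6} with $t_1$ as in \eqref{eq-t1N1beta16}. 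Finally, because $\rho/\rho_0\le 1-t/t_1$ and the exponents are nonnegative, monotonicity upgrades the $\rho$-indexed bound to the $t$-indexed \eqref{eq2-deltan-bound6}; for \eqref{eq2-omega(t)-bound6} I would use $|\cos\gamma|\le1$, $|\sin\gamma|\le|\tan\gamma|$ and $|\delta|,|\tan\gamma|\le B$ to get $|\omega|\le\tfrac{v}{\rho}(1+c_1+c_2+c_1c_2)B$, substitute $B\le\sqrt{2c_1c_2}\,B_0(\rho/\rho_0)^{\underline c/2}$, and rewrite $\tfrac1\rho(\rho/\rho_0)^{\underline c/2}=\tfrac1{\rho_0}(\rho/\rho_0)^{\underline c/2-1}$. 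This last step is precisely where $\underline c>2$ is essential: the exponent $\underline c/2-1$ must be nonnegative so that the $1/\rho$ singularity is absorbed and $\omega$ remains bounded (indeed vanishes) as $\rho\to0$.
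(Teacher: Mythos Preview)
Your proof is correct and follows the same overall strategy as the paper's: pass to $\rho$ as the independent variable, establish a $(\rho/\rho_0)^{\underline c}$ decay via Lemma~\ref{lem1}, feed $\tan^2\gamma$ into Lemma~\ref{lem2} to get the finite-time $\rho$-bound, and then estimate $|\omega|$ using $|\sin\gamma|\le|\tan\gamma|$ and $|\delta|,|\tan\gamma|\le B$. The $\omega$-bound and the role of $\underline c>2$ are handled exactly as in the paper.

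Where you differ is in the quadratic form used with Lemma~\ref{lem1}. The paper exploits the \emph{forwarding cascade} $(\zeta,\tan\gamma)$, namely $\dot\zeta=-\tfrac{v\cos\gamma}{\rho}c_2\zeta$ and $\dot{\wideparen{\tan\gamma}}=-\tfrac{v\cos\gamma}{\rho}(c_1\tan\gamma+c_2\zeta)$, and takes $V=\tfrac{c_2}{c_1}\zeta^2+\tan^2\gamma$, which yields $\dot V=-\tfrac{v\cos\gamma}{c_1\rho}\bigl[(c_2\zeta)^2+(c_1\tan\gamma)^2+(c_2\zeta+c_1\tan\gamma)^2\bigr]$ and hence $dV/d\rho\ge\underline cV/\rho$ directly. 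You instead diagonalize the $\rho$-system by introducing the second eigenvector $\xi=c_2\delta+\tan\gamma$, obtaining the \emph{exact} solutions $\zeta(\rho)=\zeta_0(\rho/\rho_0)^{c_2}$ and $\xi(\rho)=\xi_0(\rho/\rho_0)^{c_1}$; your alternative form $W=\zeta^2+c_1c_2\delta^2$ is a clean way to cover the resonant case $c_1=c_2$, and indeed its discriminant condition $c_1c_2\le(2c_1-\underline c)(2c_2-\underline c)$ coincides with the one underlying the paper's $V$. Your route is slightly more elementary and makes the decay rates $c_1,c_2$ visible individually, while the paper's $V$ keeps the forwarding interpretation front and center. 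Regarding the constant $2c_1c_2$, the paper's proof is itself only an outline (it writes ``one can show that $B^2\le\lambda_1 V$ and $V\le\lambda_2 B_0^2$'' and then records the product), so your honest flag that this constant requires a short separate check is appropriate and not a gap.
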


\begin{figure}[t]
\centering
\begin{subfigure}[b]{\linewidth}
\centering
\includegraphics[width=.85\linewidth]{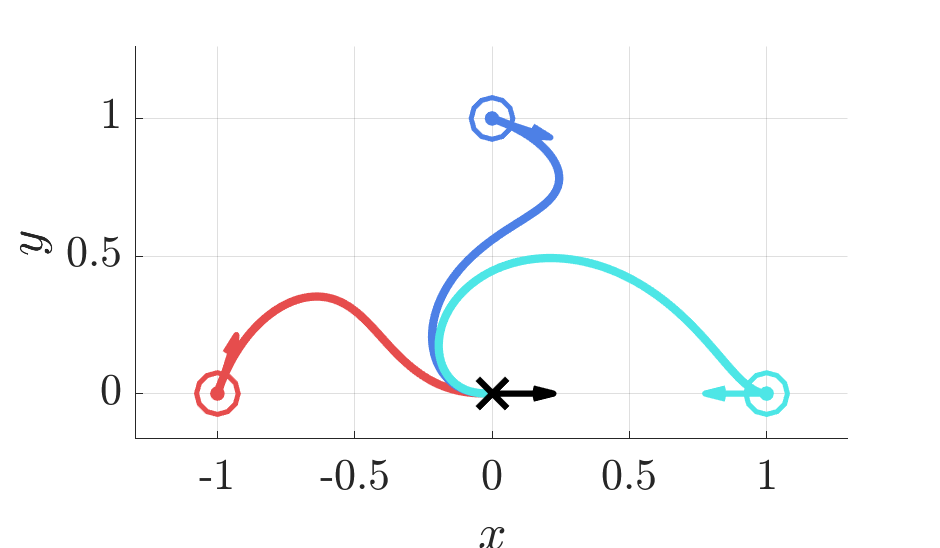}
\caption{Robot Cartesian trajectories for the initial conditions $[\rho(0), \delta(0), \gamma(0)] = [1, 0,-\pi/2.5]$ (red), $[1,-\pi/2,-\pi/2.5]$ (blue) and $[1, \pi, 0]$ (cyan).}
\vspace{0.5em}
\label{fig:trajectory_thrm1}
\end{subfigure}
\begin{subfigure}[b]{\linewidth}
\centering
\includegraphics[width=0.85\linewidth]{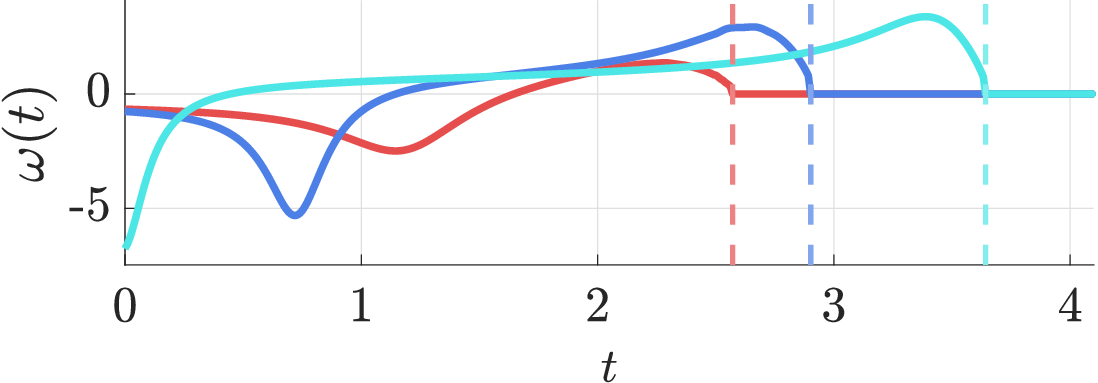}
\caption{Steering input $\omega$ with cutoff ($v = \omega = 0$) applied when $\rho \leq 0.01$. The time at which the cutoff condition is met is indicated by the dashed vertical line.}
\vspace{0.5em}
\label{fig:control_thrm1}
\end{subfigure}
\vspace{0.5cm}
\begin{subfigure}[b]{\linewidth}
\centering
\includegraphics[width=.9\linewidth]{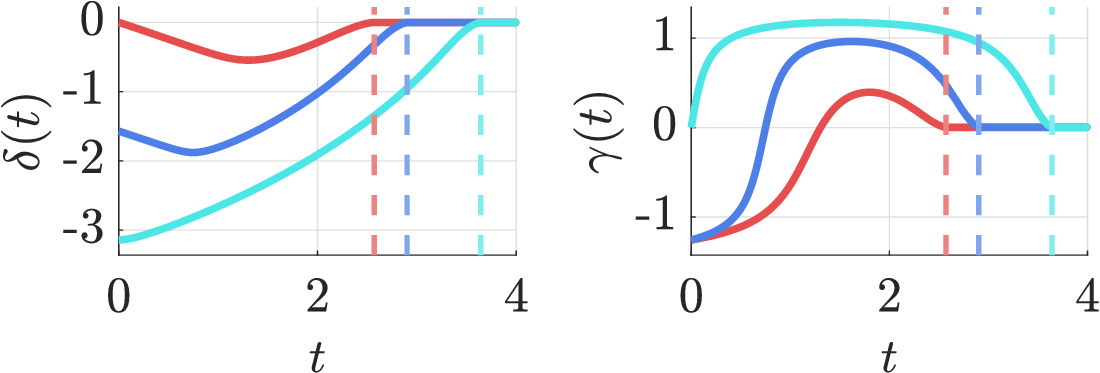}
\caption{Plots of the polar angle $\delta$ and of the LoS angle $\gamma$ versus time, indicating that angular states reach zero before the cutoff.}
\label{fig:polar_thrm1}
\end{subfigure}
\caption{Simulations with the steering control in Theorem~\ref{thm:Dubins-FT-stabilize6}, with $c_1 = 2.05$, $c_2 = 2.1$, and $v = 0.5$. Since the forward velocity is constrained to a nonzero constant, the controllers must be turned off once $\rho(T) = 0$. To mitigate numerical inaccuracies near the boundary of the interval of existence, we impose a cutoff on both control inputs if $\rho(t) \leq 0.01$.}
\label{fig:sim_thrm1}
\end{figure}

\begin{proof} 
We outline the proof by first noting that the forwarding transformation and controller yield
\begin{eqnarray}
 \dot{\zeta} &=& -\frac{v\cos\gamma}{\rho} c_2\zeta \,,
\\
\dot{\wideparen{\tan\gamma}} &=& -\frac{v\cos\gamma}{\rho}(c_1\tan\gamma+c_2\zeta)\,.
\end{eqnarray}
Note now that, the Lyapunov function $V=\frac{c_2}{c_1}\zeta^2+\tan^2\gamma$ satisfies
\begin{equation}
\begin{aligned}[b]
    \dot V = - \frac{v\cos\gamma}{c_1\rho}&\left[
(c_2\zeta)^2 +(c_1\tan\gamma)^2 \right. \\ &\left. + (c_2\zeta+c_1\tan\gamma)^2
\right]\,.
\end{aligned}
\end{equation}
We then obtain $dV/d\rho \geq \underline c V/\rho$ and hence, by Lemma \ref{lem1},  $V(\rho) \leq (\rho/\rho_0)^{\underline c} V(\rho_0)$, for $\rho\in(0,\rho_0]$. Then, from the definitions of the forwarding transformation and of the Lyapunov function, 
one can show that $B^2(\delta,\gamma) \leq \lambda_1 V(\rho)$ and $V(\rho)\leq \lambda_2 B_0^2$. It then follows that $B^2(\delta,\gamma)\leq 2c_1 (\rho/\rho_0)^{\underline c} B_0^2$. Since $\tan^2\gamma\leq 2c_1 (\rho/\rho_0)^{\underline c}  B_0^2$, by Lemma~\ref{lem2}, the inequalities \eqref{eq2-rho-bound6} and \eqref{eq2-deltan-bound6} follow.  From \eqref{eq-third-controller6} we get that $|\omega| \leq \frac{v}{\rho}(1+c_1+ c_2+c_1 c_2) B\leq \frac{v}{\rho_0}(1+c_1+ c_2+c_1 c_2)\sqrt{2c_1c_2}(\rho/\rho_0)^{\underline c/2 -1} B_0$, from which \eqref{eq2-omega(t)-bound6} follows. 
\end{proof}

It is important and interesting to note that the forwarding controller in Theorem \ref{thm:Dubins-FT-stabilize6} is virtually identical to the controller in \cite[Thm.~1]{Krstic2025_Dubins}, although their design processes are vastly different. Both controllers employ the state transformation $\zeta=\tan\gamma + c_1\delta$, which is in one case a backstepping transformation and in the other case the forwarding transformation. Nevertheless, the controllers differ minimally:
\begin{equation}
\bar\omega_{\rm bkst} = \bar\omega_{\rm fwd} +\delta \,, \quad \bar\omega_{\rm fwd} = c_1\tan\gamma+c_2\zeta\,.
\end{equation}
The difference of $\delta$ accounts for the difference in the target system, which for backstepping induces skew-symmetry of the target states $(\delta,\zeta)$ and for forwarding a cascade structure of the target states $(\zeta,\tan\gamma)$. This resemblance is shown in Fig.~\ref{fig:sim_thrm1}, where the system behavior closely matches that resulting from the control law in~\cite[Thm.~1]{Krstic2025_Dubins}. Moreover, the nonsmooth nature of the controller is also similar, as illustrated in Fig.~\ref{fig:control_thrm1}, and motivates the development of the forwarding version of \cite[Thm.~2]{Krstic2025_Dubins} which shows the smooth-in-time exponential convergence of $\delta$, $\tan\gamma$ and  $\omega$ as $t$ approaches $t_1$.

\begin{theorem}
\label{thm:Dubins-FT-stabilize7}
Let $v=\mbox{\rm const}>0$. Consider \eqref{eq:unicycle_polar_closed_loop-Gv-1}
with the control law  
\begin{equation}
    \omega =\frac{v}{\rho} \left(\sin\gamma +\cos^3\gamma \, \bar\omega\right)\,,
\end{equation}
where
\begin{subequations}
\label{eq-third-controller67}
\begin{eqnarray}
\label{eq-third-controller267}
\bar\omega &=& \frac{1}{\rho}[c_1(\tan\gamma +\delta)+c_2\zeta] +\tan\gamma\,,
\\
\label{eq-third-controller37}
\zeta &=& \tan\gamma +\delta +\frac{c_1}{\rho}\delta\,,
\end{eqnarray}
\end{subequations}
and $c_1,c_2 \geq \underline c :=\min\{c_1,c_2\} > 0$. Then, there exist 
constants $\beta_1\geq \beta_2 >0$ and $N_2(\rho_0)\geq N_1(\rho_0) >0$, which also depend on $c_1,c_2$, 
such that for all $\rho_0>0$, $\delta_0\in\mathbb{R}$, and $\gamma_0\in(-\pi/2,\pi/2)$ the following holds:
\begin{equation}
\label{eq2-rho-bound}
\rho(t)\leq \rho_0 (1-t/t_1) \,,
\end{equation}
\begin{equation}
\label{eq2-deltan-bound}
B^2(t) \leq 
N_1 {\rm e}^{-\beta_1/({1-t/t_1})} 
B^2_0 \,,
\end{equation}
\begin{align}
\label{eq2-omega(t)-bound}
|\omega(t)| \leq v \, 
N_2 {\rm e}^{-\beta_2/({1-t/t_1})}
B_0 \,,
\end{align}
for all $t\in\left[0, \min\left\{t_1,T\right\} \right)$, where
\begin{equation}
\label{eq-t1N1beta1}
t_1(\rho_0,\delta_0, \gamma_0, v,c_1,c_2) =\frac{\rho_0}{v}{\sqrt{1+N_1 {\rm e}^{-\beta_1}B^2_0}}\,,
\end{equation}
and $T$ is the interval of existence of the solutions, that is $\rho(T)=0$, and $B_0 = B(\delta_0,\gamma_0)$, with $B(\delta,\gamma) \coloneqq \sqrt{\delta^2 + \tan^2\gamma}$.
\end{theorem}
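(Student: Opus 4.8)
The plan is to mirror the proof of Theorem~\ref{thm:Dubins-FT-stabilize6}, but to invoke the \emph{second} bullet of Lemma~\ref{lem1} instead of the first. The $1/\rho$-scaled forwarding transformation \eqref{eq-third-controller37} and the $1/\rho$ terms in the controller \eqref{eq-third-controller267} will turn the characteristic $1/\rho$ weighting of Theorem~\ref{thm:Dubins-FT-stabilize6} into a $1/\rho^2$ weighting; this upgrades the polynomial $\rho$-decay to an exponential-in-$1/\rho$ decay, which is exactly the source of the smooth-in-time factor $e^{-\beta/(1-t/t_1)}$.

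First I would compute the closed-loop dynamics in the coordinates $\zeta$ and the auxiliary variable $\eta:=\tan\gamma+\delta$. Using $\dot\gamma=-\tfrac{v\cos^3\gamma}{\rho}\bar\omega$, $\dot\delta=\tfrac{v\sin\gamma}{\rho}$, and the feedback \eqref{eq-third-controller67}, I expect the apparently messy $\pm\tfrac{v\sin\gamma}{\rho}$ and $\pm\tfrac{c_1v}{\rho^2}(\sin\gamma+\delta\cos\gamma)$ contributions to cancel, collapsing the system into the clean triangular cascade
\begin{equation*}
\dot\zeta=-\frac{vc_2\cos\gamma}{\rho^2}\,\zeta\,,\qquad
\dot\eta=-\frac{v\cos\gamma}{\rho^2}\,(c_1\eta+c_2\zeta)\,.
\end{equation*}
Taking $V=\tfrac{c_2}{c_1}\zeta^2+\eta^2$ then produces
\begin{equation*}
\dot V=-\frac{v\cos\gamma}{c_1\rho^2}\Big[(c_2\zeta)^2+(c_1\eta)^2+(c_2\zeta+c_1\eta)^2\Big]\,,
\end{equation*}
and, together with $\dot\rho=-v\cos\gamma$, discarding the nonnegative cross-square and using $c_1,c_2\ge\underline c$ gives $dV/d\rho\ge \underline c\,V/\rho^2$. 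Lemma~\ref{lem1} (second bullet, $a=\underline c$) then yields $V(\rho)\le V(\rho_0)\,e^{\underline c(1/\rho_0-1/\rho)}$ for $\rho\in(0,\rho_0]$.

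The return to $B=\sqrt{\delta^2+\tan^2\gamma}$ is where the $\rho$-dependence of the transformation must be handled carefully. Inverting \eqref{eq-third-controller37} gives $\delta=\tfrac{\rho}{c_1}(\zeta-\eta)$ and $\tan\gamma=\eta-\delta$, whence $B^2\le\lambda_1(\rho_0)V(\rho)$ and $V(\rho_0)\le\lambda_2(\rho_0)B_0^2$, the constants $\lambda_1,\lambda_2$ depending on $c_1,c_2,\rho_0$ (here I freeze the $\rho$-dependent coefficients using $\rho\le\rho_0$). Consequently $\tan^2\gamma\le B^2\le \lambda_1\lambda_2\,e^{\underline c(1/\rho_0-1/\rho)}B_0^2=:\alpha(\rho/\rho_0)$ with $\alpha\in\mathcal K$; a continuation/bootstrap argument then shows $\cos\gamma>0$ is maintained on $[0,\min\{t_1,T\})$, and Lemma~\ref{lem2} supplies $\rho(t)\le\rho_0(1-t/t_1)$ with $t_1$ as in \eqref{eq-t1N1beta1}. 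Substituting the resulting $1/\rho\ge 1/[\rho_0(1-t/t_1)]$ into the exponential converts the $\rho$-bounds into \eqref{eq2-rho-bound} and \eqref{eq2-deltan-bound}, with $\beta_1,N_1$ absorbing $\underline c,\rho_0,c_1,c_2$.

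I expect the genuine obstacle to be the control bound \eqref{eq2-omega(t)-bound}. From \eqref{eq-third-controller67} one has $|\omega|\le\tfrac{v}{\rho}\big(2|\tan\gamma|+\tfrac1\rho(c_1|\eta|+c_2|\zeta|)\big)\le v\,C\,\rho^{-2}\sqrt{V}$, so a \emph{singular} prefactor $\rho^{-2}$ multiplies $\sqrt V\le\sqrt{\lambda_2}\,B_0\,e^{\frac{\underline c}{2}(1/\rho_0-1/\rho)}$. The decisive maneuver is to split the exponent and exploit that $\rho\mapsto\rho^{-2}e^{-\underline c/(4\rho)}$ is bounded on $(0,\rho_0]$ (indeed it vanishes as $\rho\to0^+$), so that $\rho^{-2}e^{-\underline c/(2\rho)}\le M\,e^{-\underline c/(4\rho)}$; the surviving exponential yields a strictly smaller rate $\beta_2<\beta_1$, consistent with the ordering $\beta_1\ge\beta_2>0$ (and $N_2\ge N_1$). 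The remaining care is to make the bootstrap rigorous, i.e.\ to verify that the estimates used to bound $\tan^2\gamma$ are not circular and that the solution stays in $\{\gamma\in(-\pi/2,\pi/2)\}$ throughout $[0,\min\{t_1,T\})$, which legitimizes every inequality above on that interval.
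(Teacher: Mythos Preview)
Your proposal is correct and follows essentially the same route as the paper's proof: the paper introduces the same auxiliary variable $\Gamma=\tan\gamma+\delta$ (your $\eta$), writes the cascade $d\zeta/d\rho=c_2\zeta/\rho^2$, $d\Gamma/d\rho=(c_1\Gamma+c_2\zeta)/\rho^2$, uses the same Lyapunov function $V=\tfrac{c_2}{c_1}\zeta^2+\Gamma^2$ to obtain $dV/d\rho\ge\underline c\,V/\rho^2$, and then invokes the second bullet of Lemma~\ref{lem1} together with Lemma~\ref{lem2}. Your treatment of the control bound (splitting the exponent so that the $\rho^{-2}$ prefactor is absorbed into a bounded factor, yielding $\beta_2<\beta_1$) is in fact more explicit than the paper, which simply asserts the existence of $N_2,\beta_2$.
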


\begin{proof}
The outline of the proof is as follows. With $\Gamma=\tan\gamma +\delta$ and the controller, one can verify that
\begin{eqnarray}
\frac{{\rm d}\zeta}{{\rm d}\rho }&=& \frac{1}{\rho^2}c_2 \zeta \,,
\\
\frac{{\rm d}\Gamma}{{\rm d}\rho} &=& \frac{1}{\rho^2}(c_1\Gamma + c_2\zeta)\,.
\end{eqnarray}
The Lyapunov function $V=\frac{c_2}{c_1}\zeta^2+\Gamma^2$ satisfies 
\begin{equation}
\hspace*{-0.3cm}
\frac{dV}{d\rho} = \frac{1}{c_1\rho^2}\left[
(c_2\zeta)^2 +(c_1\Gamma)^2 + (c_2\zeta+c_1\Gamma)^2
\right]\geq \underline c \frac{ V}{\rho^2}\,,
\end{equation} 
with $\underline c :=\min\{c_1,c_2\}$, and hence, by Lemma \ref{lem1}, $V(\rho) \leq {V(\rho_0)}{{\rm e}^{\underline c \left(\frac{1}{\rho_0}-\frac{1}{\rho}\right)}}$ for $\rho\in(0,\rho_0]$. 
Then, from the definitions of the forwarding transformation and of the Lyapunov function,
one can show that there exists $\beta_1 > 0$ such that ${\rm e}^{\underline c \left(\frac{1}{\rho_0}-\frac{1}{\rho}\right)} \le {\rm e}^{- \beta_1 \rho_0 / \rho}$, as well as $N_1(\rho)$ such that 
\begin{equation}
    B^2(\delta,\gamma) \leq N_1(\rho_0) {\rm e}^{- \beta_1 \rho_0 / \rho} B_0 \,,\label{eq:temp_B_bound}
\end{equation}
 for all $\rho \in (0, \rho_0]$.
With Lemma~\ref{lem2} one obtains \eqref{eq2-rho-bound}. Substituting \eqref{eq2-rho-bound} in \eqref{eq:temp_B_bound} and noting from \eqref{eq2-rho-bound} that $-\rho_0/\rho(t) \le -1 / (1-t/t_1)$ one gets \eqref{eq2-deltan-bound}.
Similarly, one can show that there exists $N_2(\rho) \ge N_1(\rho) > 0$, $\beta_2 > 0$ such that \eqref{eq2-omega(t)-bound} holds.
\end{proof}

Figure~\ref{fig:sim_thrm2} presents the numerical simulation corresponding to Theorem~\ref{thm:Dubins-FT-stabilize7}. The trajectories converge to the origin and, much like the behavior resulting from the application of the control law in~\cite[Thm.~2]{Krstic2025_Dubins}, the steering smoothly decays to zero before the cutoff condition is reached.

\begin{figure}[t]
\centering
\begin{subfigure}{\linewidth}
\centering
\includegraphics[width=0.8\linewidth]{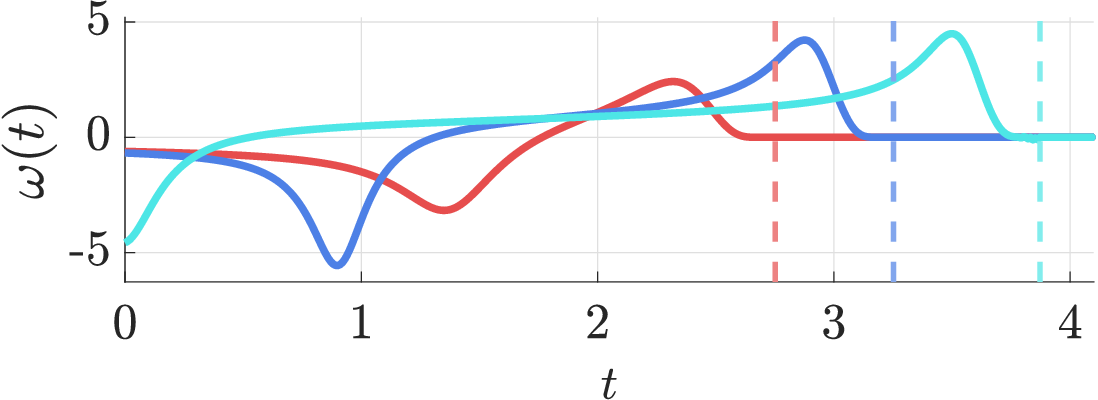}
\caption{Steering input $\omega$ with cutoff ($v = \omega = 0$) applied when $\rho \leq 0.01$. The time at which the cutoff condition is met is indicated by the dashed vertical line.}
\vspace{0.5em}
\label{fig:control_thrm2}
\end{subfigure}
\begin{subfigure}{\linewidth}
\centering
\includegraphics[width=.9\linewidth]{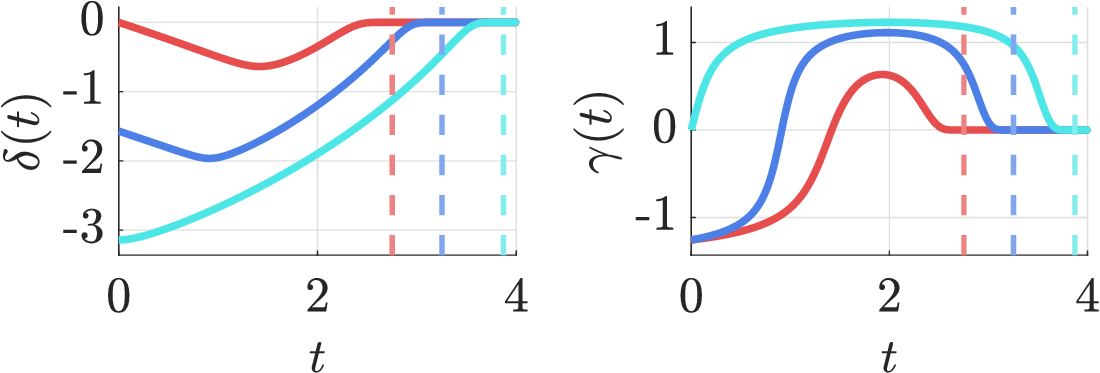}
\caption{Plots of the polar angle $\delta$ and of the LoS angle $\gamma$, indicating that angular states reach zero before the cutoff.}
\label{fig:polar_thrm2}
\end{subfigure}
\caption{Simulation with the steering control law in Theorem~\ref{thm:Dubins-FT-stabilize7} with $c_1 = 0.7$,  $c_2 = 1.3$ and $v = 0.5$. The trajectory in the $xy$-plane is nearly identical to that shown in Fig.~\ref{fig:trajectory_thrm1}, differing only in minor details, and is therefore omitted.}
\label{fig:sim_thrm2}
\end{figure}

\bibliographystyle{IEEEtranS}
\bibliography{root}

\begin{thebibliography}{10}
\providecommand{\url}[1]{#1}
\csname url@samestyle\endcsname
\providecommand{\newblock}{\relax}
\providecommand{\bibinfo}[2]{#2}
\providecommand{\BIBentrySTDinterwordspacing}{\spaceskip=0pt\relax}
\providecommand{\BIBentryALTinterwordstretchfactor}{4}
\providecommand{\BIBentryALTinterwordspacing}{\spaceskip=\fontdimen2\font plus
\BIBentryALTinterwordstretchfactor\fontdimen3\font minus \fontdimen4\font\relax}
\providecommand{\BIBforeignlanguage}[2]{{%
\expandafter\ifx\csname l@#1\endcsname\relax
\typeout{** WARNING: IEEEtranS.bst: No hyphenation pattern has been}%
\typeout{** loaded for the language `#1'. Using the pattern for}%
\typeout{** the default language instead.}%
\else
\language=\csname l@#1\endcsname
\fi
#2}}
\providecommand{\BIBdecl}{\relax}
\BIBdecl

\bibitem{aicardi1995}
M.~Aicardi, G.~Casalino, A.~Bicchi, and A.~Balestrino, ``Closed loop steering of unicycle like vehicles via lyapunov techniques,'' \emph{IEEE Robotics \& Automation Magazine}, vol.~2, no.~1, pp. 27--35, 1995.

\bibitem{astolfi1999exponential}
A.~Astolfi, ``Exponential stabilization of a wheeled mobile robot via discontinuous control,'' \emph{Journal of Dynamic Systems, Measurement, and Control}, vol. 121, no.~1, pp. 121--126, 03 1999.

\bibitem{astolfi1995exponential}
------, ``Exponential stabilization of nonholonomic systems via discontinuous control,'' \emph{IFAC Proceedings Volumes}, vol.~28, no.~14, pp. 661--666, 1995.

\bibitem{astolfi1996discontinuous}
------, ``Discontinuous control of nonholonomic systems,'' \emph{Systems \& control letters}, vol.~27, no.~1, pp. 37--45, 1996.

\bibitem{badreddin1993fuzzy}
E.~Badreddin and M.~Mansour, ``Fuzzy-tuned state-feedback control of a non-holonomic mobile robot,'' \emph{IFAC Proceedings Volumes}, vol.~26, no.~2, pp. 769--772, 1993.

\bibitem{brockett1983asymptotic}
R.~W. Brockett \emph{et~al.}, ``Asymptotic stability and feedback stabilization,'' \emph{Differential geometric control theory}, vol.~27, no.~1, pp. 181--191, 1983.

\bibitem{coron1994relation}
J.-M. Coron and L.~Rosier, ``A relation between continuous time-varying and discontinuous feedback stabilization,'' \emph{J. Math. Syst., Estimation, Control}, vol.~4, pp. 67--84, 1994.

\bibitem{dubins1957curves}
L.~E. Dubins, ``On curves of minimal length with a constraint on average curvature, and with prescribed initial and terminal positions and tangents,'' \emph{American Journal of mathematics}, vol.~79, no.~3, pp. 497--516, 1957.

\bibitem{wang24_force_controlled_safestable}
T.~Han and B.~Wang, ``Safety-critical stabilization of force-controlled nonholonomic mobile robots,'' \emph{IEEE Control Systems Letters}, vol.~8, pp. 2469--2474, 2024.

\bibitem{khalil_nonlinear_2002}
H.~K. Khalil, \emph{\BIBforeignlanguage{English}{Nonlinear systems}}.\hskip 1em plus 0.5em minus 0.4em\relax Upper Saddle River, {N.J.}: Prentice Hall, 2002.

\bibitem{Kim2025_IOC}
K.~H. Kim, V.~Todorovski, and M.~Krstic, ``Inverse optimal feedback and gain margins for unicycle stabilization,'' \emph{to be posted on arXiv}, 2025.

\bibitem{krstic2004feedback}
M.~Krstic, ``Feedback linearizability and explicit integrator forwarding controllers for classes of feedforward systems,'' \emph{IEEE Transactions on Automatic Control}, vol.~49, no.~10, pp. 1668--1682, 2004.

\bibitem{Krstic2025_Dubins}
M.~Krstic, K.~H. Kim, and V.~Todorovski, ``Half-global deadbeat parking for dubins vehicle,'' \emph{to be posted on arXiv}, 2025.

\bibitem{mazenc1996adding}
F.~Mazenc and L.~Praly, ``Adding integrations, saturated controls, and stabilization for feedforward systems,'' \emph{IEEE Transactions on Automatic Control}, vol.~41, no.~11, pp. 1559--1578, 1996.

\bibitem{palumbo2010modern}
N.~F. Palumbo, R.~A. Blauwkamp, and J.~M. Lloyd, ``Modern homing missile guidance theory and techniques,'' \emph{Johns Hopkins APL technical digest}, vol.~29, no.~1, pp. 42--59, 2010.

\bibitem{praly2022fonctions}
\BIBentryALTinterwordspacing
L.~Praly and D.~Bresch-Pietri, \emph{Fonctions de Lyapunov, stabilit{\'e}, stabilisation et att{\'e}nuation de perturbations: Partie 2 Stabilisation}.\hskip 1em plus 0.5em minus 0.4em\relax Spartacus-idh, 2022, no. pt. 2. [Online]. Available: \url{https://spartacus-idh.com/liseuse/118/#page/1}
\BIBentrySTDinterwordspacing

\bibitem{restrepo2020leader}
E.~Restrepo, A.~Lor{\'\i}a, I.~Sarras, and J.~Marzat, ``Leader-follower consensus of unicycles with communication range constraints via smooth time-invariant feedback,'' \emph{IEEE Control Systems Letters}, vol.~5, no.~2, pp. 737--742, 2020.

\bibitem{ryan1994brockett}
E.~P. Ryan, ``On brockett’s condition for smooth stabilizability and its necessity in a context of nonsmooth feedback,'' \emph{SIAM Journal on Control and Optimization}, vol.~32, no.~6, pp. 1597--1604, 1994.

\bibitem{ryoo2005optimal}
C.-K. Ryoo, H.~Cho, and M.-J. Tahk, ``Optimal guidance laws with terminal impact angle constraint,'' \emph{Journal of Guidance, Control, and Dynamics}, vol.~28, no.~4, pp. 724--732, 2005.

\bibitem{ryoo2006time}
------, ``Time-to-go weighted optimal guidance with impact angle constraints,'' \emph{IEEE Transactions on control systems technology}, vol.~14, no.~3, pp. 483--492, 2006.

\bibitem{sepulchre1997forwarding}
R.~Sepulchre, M.~Jankovic, and P.~V. Kokotovic, ``Integrator forwarding: a new recursive nonlinear robust design,'' \emph{Automatica}, vol.~33, no.~5, pp. 979--984, 1997.

\bibitem{shaferman2008linear}
V.~Shaferman and T.~Shima, ``Linear quadratic guidance laws for imposing a terminal intercept angle,'' \emph{Journal of Guidance, Control, and Dynamics}, vol.~31, no.~5, pp. 1400--1412, 2008.

\bibitem{siouris2004missile}
G.~M. Siouris, \emph{Missile guidance and control systems}.\hskip 1em plus 0.5em minus 0.4em\relax Springer, 2004.

\bibitem{todorovski2025_CLF}
V.~Todorovski, K.~H. Kim, and M.~Krstic, ``Modular design of strict control lyapunov functions for global stabilization of the unicycle in polar coordinates,'' \emph{to be posted on arXiv}, 2025.

\bibitem{zarchan2012tactical}
P.~Zarchan, \emph{Tactical and strategic missile guidance}.\hskip 1em plus 0.5em minus 0.4em\relax American Institute of Aeronautics and Astronautics, Inc., 2012.

\end{thebibliography}

\clearpage

\end{document}